\newtheorem{thm}{Theorem}
\newtheorem{lemma}[thm]{Lemma}
\newtheorem{claim}[thm]{Claim}
\theoremstyle{definition}
\newcommand{\dif}{\mathrm{d}}
\newcommand{\Exp}{\mathbb{E}}
\newcommand{\norm}[1]{\left\lVert #1 \right\rVert}
\newcommand{\ip}[2]{\left\langle #1 , #2 \right\rangle}
\newcommand{\grad}{\nabla}
\newcommand{\Trel}{\tau_{\mathrm{rel}}}
\newcommand{\eps}{\varepsilon}
\newcommand{\R}{\mathbb{R}}
\begin{document}

\title{\textbf{Optimal Convergence Rate of Hamiltonian Monte Carlo for Strongly Logconcave Distributions}}

\author{Zongchen Chen$^*$ \and Santosh S. Vempala\thanks{Georgia Tech. Email: {\tt \{chenzongchen, vempala\}@gatech.edu}}}
\date{\today}

\maketitle

\begin{abstract}
	We study \emph{Hamiltonian Monte Carlo} (HMC) for sampling from a strongly logconcave density proportional to $e^{-f}$ where 
$f:\R^d \to \R$ is $\mu$-strongly convex and $L$-smooth (the condition number is $\kappa = L/\mu$). We show that the relaxation time (inverse of the spectral gap) of ideal HMC is $O(\kappa)$, improving on the previous best bound of $O(\kappa^{1.5})$; we complement this with an example where the relaxation time is $\Omega(\kappa)$. When implemented using a nearly optimal ODE solver, HMC returns an $\eps$-approximate point in 2-Wasserstein distance using $\widetilde{O}((\kappa d)^{0.5} \eps^{-1})$ gradient evaluations per step and $\widetilde{O}((\kappa d)^{1.5}\eps^{-1})$ total time.
\end{abstract}


\section{Introduction}

Sampling logconcave densities is a basic problem that arises in machine learning, statistics, optimization, computer science and other areas. The problem is described as follows. Let $f: \R^d \to \R$ be a convex function. Our goal is to sample from the density proportional to $e^{-f(x)}$. We study \textit{Hamiltonian Monte Carlo} (HMC), one of the most widely-used \textit{Markov chain Monte Carlo} (MCMC) algorithms for sampling from a probability distribution. In many settings, HMC is believed to outperform other MCMC algorithms such as the Metropolis-Hastings algorithm or Langevin dynamics. 
In terms of theory, rapid mixing has been established for HMC in recent papers \cite{LSV18,LV18,MS17,MS19,MV18} under various settings. However, in spite of much progress, there is a gap between known upper and lower bounds even in the basic setting when $f$ is strongly convex ($e^{-f}$ is strongly logconcave) and has a Lipschitz gradient.  


Many sampling algorithms such as the Metropolis-Hastings algorithm or Langevin dynamics maintain a position $x=x(t)$ that changes with time, so that the distribution of $x$ will eventually converge to the desired distribution, i.e., proportional to $e^{-f(x)}$.  
In HMC, besides the position $x=x(t)$, we also maintain a velocity $v=v(t)$. In the simplest Euclidean setting, the Hamiltonian $H(x,v)$ is defined as
\[
	H(x,v) = f(x) + \frac{1}{2} \|v\|^2.
\]
Then in every step the pair $(x,v)$ is updated using the following system of differential equations for a fixed time interval $T$:
\begin{equation}\label{eq:ODE_system}
	\left\{
	\begin{aligned}
	\frac{\dif x(t)}{\dif t} &= \frac{\partial H(x,v)}{\partial v} = v(t),\\
	\frac{\dif v(t)}{\dif t} &= -\frac{\partial H(x,v)}{\partial x} = -\nabla f(x(t)).
	\end{aligned}
	\right.
\end{equation}
The initial position $x(0) = x_0$ is the position from the last step, and the initial velocity $v(0) = v_0$ is chosen randomly from the standard Gaussian distribution $N(0,I)$. The updated position is $x(T)$ where $T$ can be thought of as the step-size. 
It is well-known that the stationary distribution of HMC is the density proportional to $e^{-f}$. 
Observe that 
\[
	\frac{\dif H(x,v)}{\dif t} = \frac{\partial H(x,v)}{\partial x} x'(t) + \frac{\partial H(x,v)}{\partial v} v'(t) = 0, 
\] 
so the Hamiltonian $H(x,v)$ does not change with $t$. 
We can also write \eqref{eq:ODE_system} as the following ordinary differential equation (ODE):
\begin{equation}\label{eq:ODE_Ham}
	x''(t) = -\nabla f(x(t)), \quad x(0) = x_0, \quad x'(0) = v_0.
\end{equation}
We state HMC explicitly as the following algorithm.

\bigskip

\noindent
\textbf{Hamiltonian Monte Carlo algorithm}

\medskip
Input: $f:\R^d \to \R$ that is $\mu$-strongly convex and $L$-smooth, $\eps$ the error parameter.
\begin{enumerate}
	\item Set starting point $x^{(0)}$, step-size $T$, number of steps $N$, and ODE error tolerance $\delta$.
	\item For $k=1,\dots,N$:
	\begin{enumerate}
		\item Let $v \sim N(0,I)$;
		\item Denote by $x(t)$ the solution to \eqref{eq:ODE_system} with initial position $x(0) = x^{(k-1)}$ and initial velocity $v(0) = v$. Use the ODE solver to find a point $x^{(k)}$ such that
		\[
			\norm{x^{(k)} - x(T)} \le \delta.
		\] 
	\end{enumerate}
	\item Output $x^{(N)}$. 
\end{enumerate}

In our analysis, we first consider {\em ideal} HMC where in every step we have the exact solution to the ODE \eqref{eq:ODE_system} and neglect the numerical error from solving the ODEs or integration ($\delta = 0$). 

\subsection{Preliminaries}
We recall standard definitions here. Let $f:\R^d \to \R$ be a continuously differentiable function. 
We say $f$ is $\mu$-strongly convex if for all $x,y\in \R^d$, 
$$ f(y) \ge f(x) + \ip{\grad f(x)}{y-x} + \frac{\mu}{2} \norm{y-x}^2. $$
We say $f$ is $L$-smooth if $\nabla f$ is $L$-Lipschitz; i.e., for all $x,y \in \R^d$,
$$ \norm{\grad f(x) - \grad f(y)} \le L \norm{x-y}. $$ 
If $f$ is $\mu$-strongly convex and $L$-smooth, then the condition number of $f$ is $\kappa = L/\mu$.

Consider a discrete-time \emph{reversible} Markov chain $\mathcal{M}$ on $\R^d$ with stationary distribution $\pi$. 
Let
\[
	L_2(\pi) = \left\{ f: \R^d \to \R \,\middle|\, \int_{\R^d} f(x)^2 \pi(\dif x) < \infty \right\}
\]
be the Hilbert space with inner product
\[
	\ip{f}{g} = \int_{\R^d} f(x) g(x) \pi(\dif x)
\]
for $f,g \in L_2(\pi)$. 
Denote by $P$ the transition kernel of $\mathcal{M}$. 
We can view $P$ as a self-adjoint operator from $L_2(\pi)$ to itself: for $f\in L_2(\pi)$, 
\[
(Pf) (x) = \int_{\R^d} f(y)P(x,\dif y).
\]
Let $L_2^0(\pi) = \{ f\in L_2(\pi): \int_{\R^d} f(x) \pi(\dif x) = 0 \}$ be a closed subspace of $L_2(\pi)$. The (absolute) \emph{spectral gap} of $P$ is defined to be
\[
	\gamma(P) = 1 - \sup_{f\in L_2^0(\pi)}\frac{\norm{Pf}}{\norm{f}} = 1 - \sup_{\substack{f\in L_2^0(\pi)\\ \norm{f} = 1}} |\ip{Pf}{f}|.
\]
The relaxation time of $P$ is 
\[
	\Trel(P) = \frac{1}{\gamma(P)}.
\]

Let $\nu_1,\nu_2$ be two distributions on $\R^d$. The $2$-Wasserstein distance between $\nu_1$ and $\nu_2$ is defined as
\[
	W_2(\nu_1,\nu_2) = \left( \inf_{(X,Y) \in \mathcal{C}(\nu_1,\nu_2)} \Exp\left[ \norm{X-Y}^2 \right] \right)^{1/2},
\]
where $\mathcal{C}(\nu_1,\nu_2)$ is the set of all couplings of $\nu_1$ and $\nu_2$. 

\subsection{Related work}
Various versions of Langevin dynamics have been studied in many recent papers, see \cite{D17,DK19, ZLC17,RRT17, DRD18,CCBJ, CCABJ18,CFMBJ18, DCWY18,W18, VW19,MCCFBJ19}. 
The convergence rate of HMC is also studied recently in \cite{LSV18,LV18,MS17, MS19,MV18,SRH14}. 
The first bound for our setting was obtained by Mangoubi and Smith \cite{MS17}, who gave an $O(\kappa^2)$ bound on the convergence rate of ideal HMC. 

\begin{thm}[{\cite[Theorem 1]{MS17}}]
	Let $f:\R^d \to\R$ be a twice differentiable function such that $\mu I \preceq \nabla^2 f(x) \preceq L I$ for all $x\in\R^d$. 
	Then the relaxation time of ideal HMC for sampling from the density $\propto e^{-f}$ with step-size $T = \sqrt{\mu}/(2\sqrt{2} L)$ is $O(\kappa^{2})$.  
\end{thm}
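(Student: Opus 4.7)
\emph{Approach.} The plan is to establish per-step contraction in the 2-Wasserstein metric via a synchronous coupling, then convert this into a spectral gap lower bound.

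Consider two copies of ideal HMC started from $x_0$ and $y_0$ with a \emph{common} initial velocity $v_0 \sim N(0,I)$; let $x(t), y(t)$ denote the respective solutions to \eqref{eq:ODE_Ham} and set $z(t) = x(t) - y(t)$. Then $z(0) = x_0 - y_0$, $z'(0) = 0$, and $z''(t) = -A(t) z(t)$, where $A(t) = \int_0^1 \hessian f(y(t) + s z(t))\,\dif s$ satisfies $\mu I \preceq A(t) \preceq L I$ pointwise. Integrating twice gives $z(T) = z_0 - \int_0^T (T-t) A(t) z(t)\,\dif t$, whence
\[
\|z(T)\|^2 = \|z_0\|^2 - 2\int_0^T (T-t) \ip{A(t) z(t)}{z_0}\, \dif t + \left\|\int_0^T (T-t) A(t) z(t)\, \dif t\right\|^2.
\]
A Gronwall argument applied to $z'' = -Az$ with $z'(0)=0$ and $\|A\|\le L$ gives $\|z(t)-z_0\| \le O(Lt^2\|z_0\|)$ on $[0,T]$. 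Combining this with the strong-convexity lower bound $\ip{A(t) z_0}{z_0} \ge \mu\|z_0\|^2$ and smoothness $\|A(t)\| \le L$, the middle term is at least $\mu T^2 \|z_0\|^2 (1-o(1))$ while the last term is $O(L^2 T^4) \|z_0\|^2$. With $T = \sqrt{\mu}/(2\sqrt{2} L)$ we have $\mu T^2 = 1/(8\kappa^2)$ and $L^2 T^4 = 1/(64\kappa^2)$, so the leading contraction dominates and
\[
\|z(T)\|^2 \le \left(1 - \frac{c}{\kappa^2}\right)\|z_0\|^2
\]
deterministically (for every realization of $v_0$), for some absolute constant $c > 0$. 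Hence the HMC transition kernel $P$ satisfies $W_2(\delta_{x_0} P, \delta_{y_0} P)^2 \le (1 - c/\kappa^2)\|x_0 - y_0\|^2$.

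Because $\pi \propto e^{-f}$ is $\mu$-strongly logconcave and $P$ is $\pi$-reversible, per-step $W_2$-contraction at rate $1 - c/\kappa^2$ upgrades to a spectral gap $\gamma(P) = \Omega(1/\kappa^2)$ via a standard argument (e.g., using Kantorovich duality on Lipschitz test functions together with the Poincar\'e inequality $\Var_\pi(g) \le (1/\mu)\int \|\grad g\|^2 \dif\pi$ satisfied by $\pi$). Inverting the spectral gap yields $\Trel(P) = O(\kappa^2)$.

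\emph{Main obstacle.} The principal technical difficulty is the error control in the per-step estimate. Because the hypothesis bounds only $\hessian f$ (not $\nabla^3 f$), one cannot Taylor-expand to third order, and all remainders must be treated via integral forms and Gronwall-type bounds. The constant $1/(2\sqrt{2})$ in $T$ is calibrated precisely so that the $\Theta(\mu T^2)$ gain from strong convexity beats the $O(L^2 T^4)$ loss arising from both the quadratic remainder and the second-moment term; any larger choice of step-size would drown the contraction in the remainder.
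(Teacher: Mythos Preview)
The paper does not prove this statement at all: it is quoted from \cite{MS17} as background in the related-work section. There is therefore no ``paper's own proof'' to compare against. That said, your argument is essentially correct and is exactly the synchronous-coupling strategy used both in \cite{MS17} and in the paper's proof of its stronger result (Lemma~\ref{lem:HMC_contraction} and Theorem~\ref{thm:rel_upper}): couple two HMC trajectories via a shared initial velocity, show $\norm{x(T)-y(T)}^2 \le (1-\Theta(\mu T^2))\norm{x_0-y_0}^2$ by balancing the $\mu T^2$ gain against $O(L^2T^4)$ error terms, and then invoke the standard ``coarse Ricci curvature $\Rightarrow$ spectral gap'' principle. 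For the last step the paper cites \cite[Proposition~29]{O09} directly; your phrasing via Kantorovich duality plus the Poincar\'e inequality for $\pi$ is a valid alternative route to the same conclusion, though a bit more roundabout.

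One comment on your closing remark: the claim that the constant $1/(2\sqrt{2})$ in $T$ is ``calibrated precisely so that the $\Theta(\mu T^2)$ gain beats the $O(L^2T^4)$ loss'' is misleading. The whole point of the present paper is that this analysis is \emph{not} tight: with a sharper bookkeeping of the error terms (tracking the integral $\int_0^t \rho$ rather than crudely bounding $\rho\le L$), one can take the much larger step $T = \Theta(1/\sqrt{L})$ and obtain contraction $1-\Theta(1/\kappa)$, yielding the optimal $O(\kappa)$ relaxation time. So the $\kappa^2$ bound is an artifact of the coarse error control, not of any genuine obstruction at that step-size.
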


This was improved by \cite{LSV18}, which showed a bound of $O(\kappa^{1.5})$. They also gave a nearly optimal method for solving the ODE that arises in the implementation of HMC.  

\begin{thm}[{\cite[Lemma 1.8]{LSV18}}]
	Let $f:\R^d \to\R$ be a twice differentiable function such that $\mu I \preceq \nabla^2 f(x) \preceq L I$ for all $x\in\R^d$. 
	Then the relaxation time of ideal HMC for sampling from the density $\propto e^{-f}$ with step-size $T = \mu^{1/4}/(2L^{3/4})$ is $O(\kappa^{1.5})$.  
\end{thm}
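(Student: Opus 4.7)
The plan is to establish a Wasserstein contraction in one step of ideal HMC via a synchronous coupling, then convert the $W_2$-contraction into the claimed spectral gap bound using reversibility. Couple two copies of ideal HMC from $x_0$ and $y_0$ with the same initial velocity $v_0 \sim N(0,I)$, yielding trajectories $x(t), y(t)$. The difference $z(t) := y(t) - x(t)$ satisfies
\[
	z''(t) = -A(t)\, z(t), \qquad A(t) := \int_0^1 \hessian f\bigl(x(t) + s z(t)\bigr)\, \dif s,
\]
with $\mu I \preceq A(t) \preceq L I$, $z(0) = y_0 - x_0$, and, crucially, $z'(0) = 0$ because the two copies share their initial velocity.

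Next, I would analyze the energy $\phi(t) := \norm{z(t)}^2$. Direct computation gives $\phi'(0) = 0$ and $\phi''(t) = 2\norm{z'(t)}^2 - 2\ip{z(t)}{A(t) z(t)}$, so $\phi''(0) \le -2\mu \norm{z_0}^2$. In the model case of constant $A$, the exact solution $z(t) = \cos(\sqrt{A}\, t) z_0$ gives $\phi(0) - \phi(T) = \ip{z_0}{\sin^2(\sqrt{A}\, T) z_0} \ge \mu T^2 \norm{z_0}^2 (1 - O(LT^2))$. With the chosen step $T = \mu^{1/4}/(2L^{3/4})$ one has $LT^2 = \Theta(\kappa^{-1/2})$ and $\mu T^2 = \Theta(\kappa^{-3/2})$, so the target per-step contraction $(1 - c/\kappa^{3/2})$ is exactly the rate predicted by the frozen-$A$ calculation.

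The main technical obstacle is to show that this contraction survives when $A(t)$ varies. A naive second-order Taylor expansion of $\phi$ at $0$, combined with the crude bound $\norm{z'(r)} \le L r \norm{z_0}$ obtained from $z'(r) = -\int_0^r A(s) z(s)\, \dif s$, produces a positive remainder of order $L^2 T^4 \norm{z_0}^2 \sim \norm{z_0}^2/\kappa$ in $\phi(T) - \phi(0)$, larger than the desired negative contribution by a factor $\sqrt{\kappa}$. The cancellation implicit in the exact $-\sin^2(\sqrt{A}\, T)$ formula for frozen $A$ must therefore be preserved in the varying-$A$ case. One natural route is to compare the transition of $w(t) := (z(t), z'(t))$ under the true, varying-$A$ flow to that under the frozen $A = A(0)$ flow via a Gronwall/perturbation argument; another is to track the quantity $\norm{z'(t)}^2 + \ip{z(t)}{A(t) z(t)}$, which is conserved for frozen $A$, and control its drift from its initial value $\ip{z_0}{A(0)z_0}$. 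Either way, the key is that the variation of $A$ along $[0, T]$ contributes only an $O(\kappa^{-1/2})$ relative error to the leading contraction, which requires $LT^2 \lesssim \kappa^{-1/2}$, precisely matching the choice $T = \mu^{1/4}/(2L^{3/4})$. A larger step size breaks the perturbation argument; a smaller one yields worse contraction. This is the heart of the proof and the source of the $3/2$ exponent.

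Finally, the pointwise-in-$v_0$ bound $\norm{z(T)}^2 \le (1 - c/\kappa^{3/2}) \norm{z_0}^2$ gives $W_2^2(P(x_0,\cdot), P(y_0,\cdot)) \le (1 - c/\kappa^{3/2}) \norm{x_0 - y_0}^2$ for the ideal HMC kernel $P$. Since $P$ is reversible with respect to $\pi \propto e^{-f}$, this $W_2$-contraction translates to a spectral gap estimate: combining the gradient inequality $\norm{\grad Pf}_{L_2(\pi)}^2 \le (1 - c/\kappa^{3/2}) \norm{\grad f}_{L_2(\pi)}^2$ (a standard consequence of $W_2$-contraction for reversible chains) with the Poincar\'e inequality of the strongly logconcave $\pi$ yields $\gamma(P) = \Omega(\kappa^{-3/2})$, equivalently $\Trel(P) = O(\kappa^{3/2})$, as claimed.
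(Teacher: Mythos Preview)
This theorem is quoted from \cite{LSV18} and is not re-proved in the present paper; the paper's own contribution is the sharper $O(\kappa)$ bound (Theorem~\ref{thm:rel_upper}), established through the contraction Lemma~\ref{lem:HMC_contraction}. That lemma, specialized to $T=\mu^{1/4}/(2L^{3/4})$, already yields the $O(\kappa^{1.5})$ statement, so it is the natural benchmark for your proposal.

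Your setup (synchronous coupling, $z=y-x$ with $z'(0)=0$, study of $\phi(t)=\|z(t)\|^2$) matches the paper's, and you correctly diagnose that a naive second-order expansion recovers only the $O(\kappa^2)$ rate of \cite{MS17}. The gap is the step where you assert that a frozen-$A$ comparison, or tracking the drift of $\|z'\|^2+\langle z,A z\rangle$, limits the error to an $O(\kappa^{-1/2})$ fraction of the leading $\mu T^2$ contraction. Both routes, as you describe them, hinge on controlling $\|A(t)-A(0)\|$ or $A'(t)$. Under the sole hypothesis $\mu I\preceq\nabla^2 f\preceq LI$ there is no such control: the Hessian need not be Lipschitz, so $A(t)$ may oscillate between $\mu I$ and $LI$ arbitrarily fast along the trajectory. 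A Duhamel estimate then gives only $\|\Phi(T)-\Phi_0(T)\|=O(LT)$, which swamps the $\mu T^2$ gain, and $\frac{\dif}{\dif t}\bigl(\|z'\|^2+\langle z,Az\rangle\bigr)=\langle z,A'(t)z\rangle$ is simply undefined or unbounded. The assertion ``the variation of $A$ contributes only an $O(\kappa^{-1/2})$ relative error'' is therefore unjustified under the stated assumptions, and the sketch does not close.

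The paper's argument sidesteps $A'(t)$ altogether. It works with the scalar $\rho(t)=\langle\nabla f(x)-\nabla f(y),x-y\rangle/\|x-y\|^2\in[\mu,L]$ and exploits co-coercivity of the gradient, $\|\nabla f(x)-\nabla f(y)\|^2\le L\langle\nabla f(x)-\nabla f(y),x-y\rangle=L\rho\|x-y\|^2$, together with a crude two-sided bound $\tfrac12\|z_0\|^2\le\|z(t)\|^2\le 2\|z_0\|^2$ valid on $[0,1/(2\sqrt L)]$. This produces $\|z'(t)\|^2\le 2Lt\,P(t)\|z_0\|^2$ with $P(t)=\int_0^t\rho$; after two integrations the $\rho$-dependence in the positive and negative terms balances, giving $\phi(T)\le(1-\tfrac{\mu}{4}T^2)\phi(0)$ for every $T\le 1/(2\sqrt L)$, with no Hessian regularity needed. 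The passage from pointwise contraction to spectral gap is then via Ollivier's Ricci-curvature bound \cite[Proposition~29]{O09}, rather than the gradient/Poincar\'e route you outline, though that last step in your argument is not where the difficulty lies.
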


Both papers suggest that the correct bound is linear in $\kappa$: \cite{MS17} says linear is the best one can expect while \cite{LSV18} shows that there {\em exists} a choice of step-sizes (time for running the ODE) that might achieve a linear rate (Lemma 1.8, second part); however it was far from clear how to determine these step-sizes algorithmically.  


Other papers focus on various aspects and use stronger assumptions (e.g., bounds on higher-order gradients) to get better bounds on the overall convergence time or the number of gradient evaluations in some ranges of parameters. For example, \cite{MV18} shows that the dependence on dimension for the number of gradient evaluations can be as low as $d^{1/4}$ with suitable regularity assumptions (and higher dependence on the condition number). We note also that sampling logconcave functions is a polynomial-time solvable problem, without the assumptions of strong convexity or gradient Lipschitzness, and even when the function $e^{-f}$ is given only by an oracle with no access to gradients \cite{AK91, LV06}. The Riemannian version of HMC provides a faster polynomial-time algorithm for uniformly sampling polytopes \cite{LV18}. However, the dependence on the dimension is significantly higher for these algorithms, both for the contraction rate and the time per step.

\subsection{Results}
In this paper, we show that the relaxation time of ideal HMC is $\Theta(\kappa)$ for strongly logconcave functions with Lipschitz gradient.  
\begin{thm}\label{thm:rel_upper}
	Suppose that $f$ is $\mu$-strongly convex and $L$-smooth. Then the relaxation time (inverse of spectral gap) of ideal HMC for sampling from the density $\propto e^{-f}$ with step-size $T = 1/(2\sqrt{L})$ is $O(\kappa)$, 
	where $\kappa = L/\mu$ is the condition number. 
\end{thm}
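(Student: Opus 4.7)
My plan is to establish the spectral-gap bound through a one-step Wasserstein contraction, which for a reversible chain such as HMC yields a lower bound on the spectral gap via a standard reduction (e.g.\ Ollivier's coarse Ricci curvature framework, since $W_1 \le W_2$). I would couple two ideal-HMC trajectories from $x_0, y_0$ by giving them the same initial velocity $v_0 \sim N(0, I)$; setting $z(t) := x(t) - y(t)$ and subtracting the two copies of the Hamiltonian ODE \eqref{eq:ODE_Ham} gives
\[
z''(t) = -A(t)\, z(t), \qquad z(0) = x_0 - y_0, \qquad z'(0) = 0,
\]
where $A(t) = \int_0^1 \hessian f\bigl(y(t) + s\, z(t)\bigr)\,ds$ is symmetric with spectrum in $[\mu, L]$. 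The problem thus reduces to the deterministic ODE estimate $\|z(T)\|^2 \le (1 - c/\kappa)\|z_0\|^2$ for $T = 1/(2\sqrt L)$ and an absolute constant $c > 0$, uniformly over admissible $A(\cdot)$.

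In the constant-Hessian case $A(s) \equiv A_0$, this is exactly $\|\cos(\sqrt{A_0}\,T)\, z_0\|^2$ in the spectral decomposition of $A_0$, and the choice $\sqrt L\,T = 1/2 < \pi/2$ keeps us within the first quarter-period of even the fastest mode; the worst direction (the smallest eigenvalue of $A_0$) then gives $\cos^2(\sqrt\mu T) \le 1 - \tfrac{11}{12}\mu T^2 = 1 - \tfrac{11}{48\kappa}$ using the elementary bound $\cos\theta \le 1 - \theta^2/2 + \theta^4/24$ on $|\theta| \le 1/2$. So the optimal rate is already present in the simplest setting, and the remaining task is to show that time-variation of $A$ adds only lower-order corrections.

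To handle varying $A(t)$ I would iterate the Volterra equation $z(t) = z_0 - \int_0^t (t-s)\, A(s)\, z(s)\,ds$ once and write $z(T) = (I - M)\, z_0 + R$, where $M := \int_0^T (T-s)\, A(s)\, ds$ is symmetric with $(\mu T^2/2)\, I \preceq M \preceq (L T^2/2)\, I = I/8$. Since the eigenvalues of $(I - M)$ lie in $[7/8,\, 1 - 1/(8\kappa)] \subset [0,1]$, we get $\|(I-M) z_0\|^2 \le z_0^\top (I - M)\, z_0 \le (1 - 1/(8\kappa))\|z_0\|^2$, already the target rate along the smallest-eigenvalue direction of $M$. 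The main obstacle is controlling the remainder $R = -\int_0^T (T-s)\, A(s)\, (z(s) - z_0)\, ds$: a crude Gronwall estimate (which closes because $LT^2 = 1/4$) only yields $\|R\| = O(\|z_0\|)$, too weak to preserve the $1/\kappa$ gain once combined with $\|(I-M) z_0\|$. To fix this I would do a direction-sensitive analysis: iterate the Volterra equation once more to expose the $A(s)\, A(r)\, z_0$ structure inside $R$, and show that the projection of $R$ along the worst-case (smallest-eigenvalue) direction of $M$ is actually $O(\|z_0\|/\kappa^2)$, matching the constant-$A$ error $\cos\theta - (1 - \theta^2/2) = O(\theta^4)$; in the other eigendirections the leading term $(I - M) z_0$ has $\Omega(1)$ slack, which easily absorbs the naive $O(\|z_0\|)$ bound on $R$. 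This delicate bookkeeping -- which turns the $\kappa^{3/2}$ rate of earlier work into the optimal $\kappa$ -- is the heart of the argument and relies on the tight choice $\sqrt L\, T = 1/2$ so that both $LT^2 = 1/4$ and $(LT^2)^2 = 1/16$ serve as genuine smallness parameters for the Picard-type expansion.
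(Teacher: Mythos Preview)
Your reduction from a one-step $W_2$ contraction to a spectral-gap bound via Ollivier's coarse Ricci curvature is exactly what the paper does, and the synchronous coupling (same initial velocity) is also the same. The gap is in the contraction estimate itself. Your direction-sensitive analysis hinges on the claim that the projection of the remainder $R$ onto the smallest-eigenvalue direction $e$ of $M$ is $O(\|z_0\|/\kappa^2)$, with the other directions handled by the $\Omega(1)$ slack in $(I-M)z_0$. But when $z_0 = e$, the component of $R$ \emph{orthogonal} to $e$ still contributes $\|R_\perp\|^2$ to $\|z(T)\|^2$, and nothing in your scheme forces $\|R_\perp\|^2 = O(1/\kappa)$. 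Concretely, if $A(s)$ has a small diagonal entry $\langle e, A(s) e\rangle = O(\mu)$ but an off-diagonal of size $\Theta(\sqrt{\mu L})$ (which is allowed for a symmetric matrix with spectrum in $[\mu,L]$), then $\|A(s)e\| = \Theta(\sqrt{\mu L})$, and the second-order term $N z_0$ has norm $\Theta(L^{3/2}\sqrt{\mu}\,T^4)$, giving $\|R\|^2 = \Theta(\mu L^3 T^8)$ only \emph{after} one uses $\|A e\|^2 \le L\langle e, Ae\rangle$; with the crude bound $\|A e\|\le L$ you get $\|R\|^2 = \Theta(L^4 T^8) = \Theta(1)$, which swamps the $1/\kappa$ gain. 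The analogy with the constant-$A$ case breaks because there the eigenbasis of $M$ coincides with that of every $A(s)$, so no such off-diagonal leakage occurs.

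The missing ingredient, and the paper's key new idea, is precisely the co-coercivity inequality $\|\nabla f(x)-\nabla f(y)\|^2 \le L\,\langle \nabla f(x)-\nabla f(y),\, x-y\rangle$ (equivalently $A^2 \preceq L A$). The paper does not expand in a Picard series at all; it works directly with the scalars $\alpha(t)=\tfrac12\|x-y\|^2$ and $\|u-v\|$, bounds $\bigl|\tfrac{d}{dt}\|u-v\|\bigr| \le \|\nabla f(x)-\nabla f(y)\| \le \sqrt{L\rho(t)}\,\|x-y\|$ with $\rho(t)=\langle\nabla f(x)-\nabla f(y),x-y\rangle/\|x-y\|^2\in[\mu,L]$, and after one Cauchy--Schwarz obtains $\|u-v\|^2 \le 2Lt\,P(t)\,\|x_0-y_0\|^2$ where $P(t)=\int_0^t\rho$. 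Plugging this into $\alpha'' = -\rho\,\|x-y\|^2 + \|u-v\|^2$ and integrating twice gives $\alpha(T)\le(1-\mu T^2/4)\alpha(0)$; the crucial point is that the loss term carries the same $P(t)$ as the gain, so they cancel to leave the factor $(1-2LT^2)\ge 1/2$ rather than a bare constant. This is what upgrades $\kappa^{1.5}$ to $\kappa$, and it is absent from your outline. (A minor further point: your use of $A(t)=\int_0^1 \hessian f(\cdot)\,ds$ presupposes $f$ twice differentiable, which the paper deliberately avoids.)
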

\noindent

We remark that the only assumption we made about $f$ is strongly convexity and smoothness (in particular, we do not require that $f$ is twice differentiable, which is assumed in both \cite{LSV18} and \cite{MS17}). 

We also establish a matching lower bound on the relaxation time of ideal HMC, implying the tightness of Theorem~\ref{thm:rel_upper}.
\begin{thm}\label{thm:rel_lower}
	For any $0 < \mu \le L$, there exists a $\mu$-strongly convex and $L$-smooth function $f$, such that the relaxation time of ideal HMC for sampling from the density $\propto e^{-f}$ with step-size $T = O(1/\sqrt{L})$ is $\Omega(\kappa)$, where $\kappa = L/\mu$ is the condition number. 
\end{thm}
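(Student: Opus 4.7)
The plan is to exhibit an explicit quadratic example in two dimensions and use the test-function characterization of the spectral gap to produce the matching lower bound. Concretely, take
\[
f(x_1,x_2) = \frac{\mu}{2} x_1^2 + \frac{L}{2} x_2^2,
\]
whose Hessian is the constant diagonal matrix $\mathrm{diag}(\mu,L)$, so $f$ is $\mu$-strongly convex and $L$-smooth as required. The HMC ODE \eqref{eq:ODE_Ham} decouples across coordinates into two harmonic oscillators, with explicit solutions
\[
x_i(T) = \cos(\sqrt{c_i}\,T)\,x_i(0) + \frac{\sin(\sqrt{c_i}\,T)}{\sqrt{c_i}}\,v_i(0),
\qquad c_1=\mu,\; c_2=L.
\]
Thus one step of ideal HMC acts linearly on each coordinate and mixes in fresh Gaussian noise; the two coordinates are independent throughout.

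Next I would focus on the slow (small-curvature) direction, which will certify slow mixing. The stationary distribution factorizes as $\pi = N(0,1/\mu)\otimes N(0,1/L)$, and I would take the test function $g(x) = x_1$. Clearly $g\in L_2^0(\pi)$ with $\norm{g}^2 = 1/\mu$. From the explicit update,
\[
(Pg)(x) = \Exp\bigl[x_1^{(k+1)} \,\big|\, x^{(k)}=x\bigr] = \cos(\sqrt{\mu}\,T)\, x_1,
\]
so $\norm{Pg}/\norm{g} = \lvert \cos(\sqrt{\mu}\,T)\rvert$.

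Now I use the hypothesis $T = O(1/\sqrt{L})$, which gives $\sqrt{\mu}\,T = O(1/\sqrt{\kappa})$. For $\kappa$ larger than an absolute constant, $\sqrt{\mu}\,T$ is small, so
\[
\lvert \cos(\sqrt{\mu}\,T)\rvert \;\ge\; 1 - \tfrac{1}{2}(\sqrt{\mu}\,T)^2 \;=\; 1 - O(1/\kappa).
\]
By the definition of the spectral gap recalled in the preliminaries,
\[
\gamma(P) \;=\; 1 - \sup_{f\in L_2^0(\pi)} \frac{\norm{Pf}}{\norm{f}} \;\le\; 1 - \frac{\norm{Pg}}{\norm{g}} \;=\; O(1/\kappa),
\]
yielding $\Trel(P) = \Omega(\kappa)$. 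The case of bounded $\kappa$ is trivial since the stated bound is $\Omega(\kappa) = \Omega(1)$ there.

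There is no serious obstacle: the example is a direct product of two one-dimensional Gaussians, so the dynamics solve in closed form. The only care needed is ensuring the test function actually sits in $L_2^0(\pi)$ (which it does) and quantifying the constant in $T = O(1/\sqrt{L})$ so that $\sqrt{\mu}\,T \le 1$ in order to apply the Taylor bound on $\cos$; this only restricts us to $\kappa$ exceeding an absolute constant, where the conclusion is non-trivial.
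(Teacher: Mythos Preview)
Your proposal is correct and follows essentially the same route as the paper: the same two-dimensional Gaussian example, the same closed-form harmonic-oscillator solution, and the same test function $g(x)=x_1$ (which is in fact an eigenfunction of $P$ with eigenvalue $\cos(\sqrt{\mu}\,T)$) to certify $\gamma(P)\le 1-|\cos(\sqrt{\mu}\,T)|\le \tfrac{1}{2}\mu T^2 = O(1/\kappa)$. The only cosmetic difference is that the paper computes $\langle Pg,g\rangle/\norm{g}^2$ after writing out the Gaussian transition kernel explicitly, whereas you go directly to $\norm{Pg}/\norm{g}$; since $g$ is an eigenfunction these coincide.
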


Using the nearly optimal ODE solver from \cite{LSV18}, we obtain the following convergence rate in 2-Wasserstein distance for the HMC algorithm. We note that since our new convergence rate allows larger steps, the ODE solver is run for a longer time step. 
\begin{thm}
	\label{thm:mixing}
	Let $f:\R^d \to\R$ be a twice differentiable function such that $\mu I \preceq \nabla^2 f(x) \preceq L I$ for all $x\in\R^d$. 
	Let $\pi \propto e^{-f}$ be the target distribution, and let $\pi_{\textsc{hmc}}$ be the distribution of the output of HMC with starting point $x^{(0)} = \arg\min_x f(x)$, step-size $T = 1/(16000\sqrt{L})$, and 
	ODE error tolerance $\delta = \sqrt{\mu}T^2 \eps/16$.
	For any $0<\eps<\sqrt{d}$, if we run HMC for $N = O\left( \kappa \log(d/\eps) \right)$ steps where $\kappa = L/\mu$, then we have 
	\[
		W_2(\pi_{\textsc{hmc}},\pi) \le \frac{\eps}{\sqrt{\mu}}.
	\]
	Each step takes $O\left( \sqrt{\kappa}d^{3/2}\eps^{-1} \log(\kappa d/\eps) \right)$ time and $O\big( \sqrt{\kappa d}\eps^{-1} \log(\kappa d/\eps) \big)$ evaluations of $\grad f$, amortized over all steps.
\end{thm}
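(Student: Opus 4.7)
The plan is to combine the $\Omega(1/\kappa)$ spectral gap from Theorem~\ref{thm:rel_upper} with an analysis of the per-step ODE-solver error, and to bound the work of each step using the collocation solver of \cite{LSV18}. The first, and most important, step is to upgrade the spectral gap into a per-step $W_2$ contraction for \emph{ideal} HMC. Under strong convexity the natural synchronous coupling, which shares the initial Gaussian velocity $v\sim N(0,I)$ between two copies of the chain, should yield, with step-size $T=1/(16000\sqrt{L})$,
\[
W_2(P_{\text{ideal}}\nu_1,\,P_{\text{ideal}}\nu_2)\ \le\ (1-c/\kappa)\,W_2(\nu_1,\nu_2)
\]
for some absolute constant $c>0$ and any two distributions $\nu_1,\nu_2$. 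The step-size here is a constant factor smaller than the one in Theorem~\ref{thm:rel_upper}; the purpose is to give the coupling analysis enough slack so that the contraction constant is concrete.

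The second step is to handle the initial distance and the ODE error in tandem. Strong log-concavity of $\pi$ gives $\Exp_\pi\norm{X-x^{(0)}}^2\le d/\mu$ for $x^{(0)}=\arg\min f$, so $W_2(\delta_{x^{(0)}},\pi)\le\sqrt{d/\mu}$. Each implementable HMC step differs from the ideal step by a translation of norm at most $\delta$, so composing $N$ steps and using the contraction above yields
\[
W_2(\pi_{\textsc{hmc}},\pi)\ \le\ (1-c/\kappa)^N\sqrt{d/\mu}\;+\;\delta\sum_{k=0}^{N-1}(1-c/\kappa)^k \ \le\ (1-c/\kappa)^N\sqrt{d/\mu}+\frac{\kappa\delta}{c}.
\]
Choosing $N=O(\kappa\log(d/\eps))$ pushes the first term below $\eps/(2\sqrt{\mu})$. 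With $\delta=\sqrt{\mu}\,T^2\eps/16$ and $T\asymp 1/\sqrt{L}$, one has $\kappa\delta=O(\eps/\sqrt{\mu})$, and the explicit constants in the theorem are chosen so that the overall sum is at most $\eps/\sqrt{\mu}$.

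The third step is the per-step cost, which comes from invoking the nearly optimal collocation ODE solver of \cite{LSV18} on \eqref{eq:ODE_system} for time $T\asymp 1/\sqrt{L}$ with error tolerance $\delta=\Theta(\sqrt{\mu}\,\eps/L)$. Substituting these parameters into their guarantee yields $\widetilde{O}(\sqrt{\kappa d}/\eps)$ gradient evaluations and $\widetilde{O}(\sqrt{\kappa}\,d^{3/2}/\eps)$ arithmetic time per step. The ``amortized'' phrasing is there because the solver's cost depends mildly on the norm of $\grad f(x^{(k-1)})$, and hence on $\norm{x^{(k-1)}-x^*}$; the $W_2$ contraction of the previous step lets us control this quantity on average via a potential-function argument.

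The main obstacle is the first step: the bare spectral gap of Theorem~\ref{thm:rel_upper} is an $L^2$ statement and does not, on its own, yield either a $W_2$ contraction from a near-delta initial distribution or a clean way to absorb the per-step additive error $\delta$. One needs the coupling underlying Theorem~\ref{thm:rel_upper} to give per-step $W_2$ contraction with a concrete constant, and one needs to verify that it still works with the smaller step-size $T=1/(16000\sqrt{L})$. The remaining pieces---initial distance, geometric error accumulation, and invoking the ODE solver---are essentially bookkeeping once the contraction is in hand.
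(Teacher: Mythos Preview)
Your plan is essentially the paper's proof, and the pieces you list (per-step $W_2$ contraction, additive accumulation of the ODE error, amortized cost via the collocation solver) are exactly what the paper uses. One framing point is off, though: what you call the ``main obstacle'' is not an obstacle at all. The paper does not derive $W_2$ contraction from the spectral gap; the logic runs the other way. Lemma~\ref{lem:HMC_contraction} is already a \emph{pointwise} synchronous-coupling bound,
\[
\norm{x(T)-y(T)}^2 \le \Bigl(1-\tfrac{1}{4c^2\kappa}\Bigr)\norm{x(0)-y(0)}^2,
\]
valid for every $T=1/(c\sqrt{L})$ with $c\ge 2$, so $c=16000$ is covered verbatim and the $W_2$ contraction follows immediately by taking expectations. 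Theorem~\ref{thm:rel_upper} is a corollary of this lemma, not its source.

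With that in hand, the paper simply packages your steps one and two as Lemma~\ref{lem:ODE_prep}, which quotes \cite[Theorem 3.2]{LSV18} for the $W_2$ bound (the paper notes that this theorem, though stated there for $T\le O(\mu^{1/4}/L^{3/4})$, holds on the whole range where Lemma~\ref{lem:HMC_contraction} applies). The only work done in the paper's own proof of Theorem~\ref{thm:mixing} is your step three: bounding the amortized per-step cost. There, rather than a potential-function argument, the paper combines the pieces count $m_k = O(\sqrt{\kappa}/\eps)\bigl(\|v_0^{(k-1)}\|+T\|\nabla f(x^{(k-1)})\|\bigr)$ from \cite[Lemma~4.1]{LSV18} with the average-gradient bound $\tfrac{1}{N}\sum_k \Exp\|\nabla f(x^{(k-1)})\|^2 = O(Ld)$ from \cite[Lemma~4.2]{LSV18}, and closes with Cauchy--Schwarz on $M=\sum_k m_k$ plus concavity of $\log^2$ to control $\Exp[M\log M]$.
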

\noindent
The comparison of convergence rates, running times and numbers of gradient evaluations is summarized in the following table with polylog factors omitted. 

\begin{center}
{\renewcommand{\arraystretch}{1.3}%
\begin{tabular}{c|c|c|c}
\hline
reference & convergence rate & \# gradients & total time\\
\hline
\cite{MS17} & $\kappa^2$ & $\kappa^{6.5} d^{\,0.5}$  & $\kappa^{6.5} d^{1.5}$ \\
\cite{LSV18} & $\kappa^{1.5}$ & $\kappa^{1.75} d^{\,0.5}$ &  $\kappa^{1.75} d^{1.5}$ \\
this paper & $\kappa$ & $\kappa^{1.5} d^{\,0.5}$ & $\kappa^{1.5} d^{1.5}$\\
\hline
\end{tabular}
}
\end{center}

\section{Convergence of ideal HMC}
In this section we show that the spectral gap of ideal HMC is $\Omega(1/\kappa)$, and thus prove Theorem~\ref{thm:rel_upper}. 
We first show a contraction bound for ideal HMC, which roughly says that the distance of two points is shrinking after one step of ideal HMC. 

\begin{lemma}[Contraction bound]
	\label{lem:HMC_contraction}
	Suppose that $f$ is $\mu$-strongly convex and $L$-smooth. Let $x(t)$ and $y(t)$ be the solution to \eqref{eq:ODE_system} with initial positions $x(0)$, $y(0)$ and initial velocities $x'(0) = y'(0)$. Then for $0\le t \le 1/(2\sqrt{L})$ we have
	\[
	\norm{x(t) - y(t)}^2 \le \left( 1 - \frac{\mu}{4} t^2 \right) \norm{x(0) - y(0)}^2.
	\]
	In particular, by setting $t = T = 1/(c\sqrt{L})$ for some constant $c \ge 2$ we get
	\[
	\norm{x(T) - y(T)}^2 \le \left( 1 - \frac{1}{4c^2 \kappa} \right) \norm{x(0) - y(0)}^2
	\]
	where $\kappa = L/\mu$.
\end{lemma}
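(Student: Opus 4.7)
Let $z(t) = x(t) - y(t)$ and $w(s) = \nabla f(x(s)) - \nabla f(y(s))$. Since $z'(0) = 0$ and $z''(s) = -w(s)$, two integrations give $z(t) = z(0) - I(t)$ with $I(t) := \int_0^t (t-s) w(s)\, \dif s$. The plan is to expand
\begin{equation*}
\|z(t)\|^2 = \|z(0)\|^2 - 2\langle z(0), I(t)\rangle + \|I(t)\|^2,
\end{equation*}
split $\langle z(0), w(s)\rangle = \langle z(s), w(s)\rangle + \langle z(0) - z(s), w(s)\rangle$, and then control each piece. Define
\[ A := \int_0^t (t-s)\|z(s)\|^2 \dif s,\quad B := \int_0^t (t-s)\|w(s)\|^2 \dif s,\quad C := \int_0^t (t-s)\langle z(s), w(s)\rangle \dif s. \]

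The structural ingredients I would use are: (i) strong convexity, giving $C \ge \mu A$; (ii) co-coercivity of $\nabla f$ via Baillon--Haddad (a consequence of $L$-smoothness for convex $f$), giving $\|w(s)\|^2 \le L\,\langle z(s), w(s)\rangle$ and hence $B \le LC$; and (iii) Cauchy--Schwarz (with weight $t-s$), which produces $\|I(t)\|^2 \le (t^2/2) B$, the pointwise bound $\|z(0)-z(s)\|^2 \le (s^2/2) B$, and from there $\bigl|\int_0^t (t-s)\langle z(0)-z(s), w(s)\rangle \dif s\bigr| \le B t^2/\sqrt{24}$. Combining (i) and (ii) as $-2C \le -\mu A - B/L$ and plugging everything in, the $B$-terms collect to the form $-B/L + c\, Bt^2$ with $c < 1$; since $t^2 \le 1/(4L)$ this is negative with constant-factor slack, so the $B$-contribution can be discarded to leave $\|z(t)\|^2 - \|z(0)\|^2 \le -\mu A$. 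A short bootstrap using $\|z(s) - z(0)\| \le L \int_0^s (s-r)\|z(r)\| \dif r$ together with $Lt^2 \le 1/4$ yields $\sup_{s\le t}\|z(s)\| \le (8/7)\|z(0)\|$ and $\inf_{s\le t}\|z(s)\| \ge (6/7)\|z(0)\|$, so $A \ge (36/49) \cdot (t^2/2) \|z(0)\|^2 \ge (t^2/4)\|z(0)\|^2$, which gives the claim. The ``in particular'' assertion at $t = T$ is then immediate.

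The main obstacle is finding a bound on $\|w(s)\|$ that is sharp enough. If one uses only smoothness, i.e.\ $\|w\| \le L\|z\|$, the Cauchy--Schwarz error terms become $O(L^2 t^4 \|z(0)\|^2)$, which at $t \sim 1/\sqrt{L}$ is $\Theta(\|z(0)\|^2)$ and independent of $\mu$; this cannot be offset by a target decrease of size $\mu t^2 \|z(0)\|^2 = \Theta(\|z(0)\|^2/\kappa)$, losing a factor of $\kappa$. The fact that Baillon--Haddad gives $\|w\|^2 \le L\langle z,w\rangle$ (equivalently $B \le LC$) is exactly what produces a negative term at the same scale in $B$ as the Cauchy--Schwarz error, allowing the error to be absorbed and removing the spurious $\kappa$.
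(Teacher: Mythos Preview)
Your proof is correct and rests on the same two pillars as the paper's: (i) the co-coercivity inequality $\|w(s)\|^2 \le L\langle z(s), w(s)\rangle$ for gradients of convex $L$-smooth functions (what you call Baillon--Haddad), and (ii) a crude two-sided bound $\|z(s)\|\asymp\|z(0)\|$ on the short interval $[0,1/(2\sqrt{L})]$. The packaging differs. The paper takes the differential route: it computes
\[
\frac{\dif^2}{\dif t^2}\,\tfrac{1}{2}\|z\|^2 \;=\; -\rho\,\|z\|^2 + \|z'\|^2,
\qquad \rho(t)=\frac{\langle w,z\rangle}{\|z\|^2}\in[\mu,L],
\]
bounds the velocity term by $\|z'\|^2 \le 2Lt\,P(t)\,\|z(0)\|^2$ with $P(t)=\int_0^t\rho$, and integrates twice. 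You take the integral route: you write $z(t)=z(0)-I(t)$ from the start, expand $\|z(t)\|^2$, and organize the estimates as relations among your $A,B,C$, with the single line $-2C\le -\mu A - B/L$ doing the work that the paper's interplay between $\rho$ and $P$ does. Your formulation bypasses the intermediate velocity variable $\|u-v\|$ and is arguably a bit more streamlined, but the analytic content --- including your closing observation that co-coercivity is precisely what lets the $O(Bt^2)$ error be absorbed by $-B/L$, whereas mere $L$-smoothness would cost a factor of $\kappa$ --- matches the paper's argument exactly.
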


\begin{proof}
	Consider the two ODEs for HMC:
	\[
		\left\{\,
		\begin{aligned}
			x'(t) &= u(t);\\
			u'(t) &= -\grad f(x(t)).
		\end{aligned}
		\right.
		\qquad\text{and}\qquad
		\left\{\,
		\begin{aligned}
			y'(t) &= v(t);\\
			v'(t) &= -\grad f(y(t)).
		\end{aligned}
		\right.
	\]
	with initial points $x(0), y(0)$ and initial velocities $u(0) = v(0)$. 
	For the sake of brevity, we shall write $x = x(t)$, $y = y(t)$, $u = u(t)$, $v = v(t)$ and omit the variable $t$, as well as letting $x_0 = x(0)$, $y_0 = y(0)$. 
	We are going to show that
	\[
		\norm{x-y}^2 \leq \left( 1 - \frac{\mu}{4} t^2 \right) \norm{x_0 - y_0}^2
	\]
	for all $0 \le t \le 1/(2\sqrt{L})$. 
	
	Consider the derivative of $\frac{1}{2} \norm{x-y}^2$:
	\begin{equation}\label{eqn:deri_x-y}
		\frac{\dif}{\dif t} \left( \frac{1}{2} \norm{x-y}^2 \right) = 
		\ip{x'-y'}{x-y} = \ip{u-v}{x-y}.
	\end{equation}
	Taking derivative on both sides, we get
	\begin{align}
		\frac{\dif^2}{\dif t^2} \left( \frac{1}{2} \norm{x-y}^2 \right) 
		&= \ip{u'-v'}{x-y} + \ip{u-v}{x'-y'} \nonumber\\
		&= - \ip{\grad f(x) - \grad f(y)}{x-y} + \norm{u-v}^2 \nonumber\\ 
		&= - \rho \norm{x-y}^2 + \norm{u-v}^2, \label{eqn:deri2_x-y}
	\end{align}
	where we define
	\[
		\rho = \rho(t) = \frac{\ip{\grad f(x) - \grad f(y)}{x-y}}{\norm{x-y}^2}.
	\]
	Since $f$ is $\mu$-strongly convex and $L$-smooth, we have $\mu \le \rho \le L$ for all $t\geq 0$.
	
	We will upper bound the term $- \rho\norm{x-y}^2 + \norm{u-v}^2$, while keeping its dependency on $\rho$. To lower bound $\norm{x-y}^2$, we use the following crude bound.
	\begin{claim}[Crude bound]
		\label{claim:crude-bound}
		For all $0 \le t \le 1/(2\sqrt{L})$ we have
		\begin{equation}\label{eqn:crude_bound}
			\frac{1}{2} \norm{x_0 - y_0}^2 \le \norm{x-y}^2 \le 2 \norm{x_0 - y_0}^2.
		\end{equation}
	\end{claim}
	\noindent
	The proof of this claim is postponed to Section~\ref{subsec:proof_claim}. 

	Next we derive an upper bound on $\norm{u-v}^2$. 
	The derivative of $\norm{u-v}$ is given by
	\[
		\norm{u-v} \left( \frac{\dif}{\dif t} \norm{u-v} \right) = \frac{\dif^2}{\dif t^2} \left( \frac{1}{2} \norm{u-v}^2 \right) = \ip{u'-v'}{u-v} = - \ip{\grad f(x) - \grad f(y)}{u-v}.
	\]
	Thus, its absolute value is upper bounded by
	\[
		\left| \frac{\dif}{\dif t} \norm{u-v} \right| = \frac{\left| - \ip{\grad f(x) - \grad f(y)}{u-v} \right|}{\norm{u-v}} \le \norm{\grad f(x) - \grad f(y)}.
	\]
	Since $f$ is $L$-smooth and convex, we have
	\[
		\norm{\grad f(x) - \grad f(y)}^2 \le L \ip{\grad f(x) - \grad f(y)}{x-y} = L\rho \norm{x-y}^2 \le 2L\rho \norm{x_0-y_0}^2,
	\]
	where the last inequality follows from the crude bound \eqref{eqn:crude_bound}. 
	Then, using the fact that $u_0=v_0$ and the Cauchy-Schwarz inequality, we can upper bound $\norm{u-v}^2$ by
	\begin{align*}
		\norm{u-v}^2 &\le \left( \int_{0}^t \left| \frac{\dif}{\dif s} \norm{u-v} \right| \dif s \right)^2\\ 
		&\le \left( \int_{0}^t \sqrt{2L\rho} \norm{x_0-y_0} \dif s \right)^2\\ 
		&\le 2L t \left( \int_{0}^t \rho \,\dif s \right) \norm{x_0-y_0}^2. 
	\end{align*}
	Define the function
	\[
		P = P(t) = \int_{0}^t \rho \,\dif s,
	\]
	so $P(t)$ is nonnegative and monotone increasing, with $P(0) = 0$. Also we have $\mu t \le P(t) \le Lt$ for all $t\geq 0$. Then, 
	\begin{equation}\label{eqn:bound_u-v}
		\norm{u-v}^2 \le 2Lt P \norm{x_0-y_0}^2.
	\end{equation}
	
	Plugging \eqref{eqn:crude_bound} and \eqref{eqn:bound_u-v} into \eqref{eqn:deri2_x-y}, we deduce that
	\begin{equation*}
		\frac{\dif^2}{\dif t^2} \left( \frac{1}{2} \norm{x-y}^2 \right) 
		\le - \rho \left( \frac{1}{2} \norm{x_0 - y_0}^2 \right) + 2Lt P \norm{x_0-y_0}^2. 
	\end{equation*}
	If we define
	\[
		\alpha(t) = \frac{1}{2} \norm{x-y}^2, 
	\]
	then we have 
	\[
		\alpha''(t) \le - \alpha(0) \big( \rho(t)  - 4L t P(t) \big).
	\]
	Integrating both sides and using $\alpha'(0) = 0$, we obtain
	\begin{align*}
		\alpha'(t) &= \int_{0}^t \alpha''(s) \dif s\\ 
		&\le - \alpha(0) \left( \int_{0}^t \rho(s) \dif s - 4L \int_{0}^t sP(s) \dif s \right)\\ 
		&\le - \alpha(0) \left( P(t) - 4L P(t) \int_{0}^t s \dif s \right)\\ 
		&= - \alpha(0) P(t) \left( 1 - 2Lt^2 \right), 
	\end{align*}
	where the second inequality is due to the monotonicity of $P(s)$. 
	Since for all $0\le t\le 1/(2\sqrt{L})$ we have $P(t) \ge \mu t$ and $1-2Lt^2 \ge 1/2$, we deduce that
	\[
		\alpha'(t) \le - \alpha(0) \frac{\mu}{2} t.
	\]
	Finally, one more integration yields
	\[
		\alpha(t) = \alpha(0) + \int_{0}^t \alpha'(s) \dif s \le \alpha(0) \left( 1 - \frac{\mu}{4} t^2 \right), 
	\]
	and the theorem follows.
\end{proof}

\begin{proof}[Proof of Theorem~\ref{thm:rel_upper}]
	Lemma~\ref{lem:HMC_contraction} implies that for any constant $c\ge 2$, the Ricci curvature of ideal HMC with step-size $T=1/(c\sqrt{L})$ is at least $1/(8c^2\kappa)$. 
	Then, it follows from \cite[Proposition 29]{O09} that the spectral gap of ideal HMC is at least $1/(8c^2\kappa)$. Hence, the relaxation time is upper bounded by $8c^2\kappa = O(\kappa)$.
\end{proof}

\subsection{Proof of Claim~\ref{claim:crude-bound}}
\label{subsec:proof_claim}
We present the proof of Claim~\ref{claim:crude-bound} in this section. We remark that a similar crude bound was established in \cite{LSV18} for general matrix ODEs. Here we prove the crude bound specifically for the Hamiltonian ODE, but without assuming that $f$ is twice differentiable.
\begin{proof}[Proof of Claim~\ref{claim:crude-bound}]
	We first derive a crude upper bound on $\norm{u-v}$. Since $f$ is $L$-smooth, we have 
	\begin{align*}
		\frac{\dif}{\dif t} \norm{u-v} 
		&= \frac{- \ip{\grad f(x) - \grad f(y)}{u-v}}{\norm{u-v}}\\
		&\le \norm{\grad f(x) - \grad f(y)} 
		\le L \norm{x-y}.
	\end{align*}
	Then from $u_0 = v_0$ we get
	\[
		\norm{u-v} = \int_{0}^t \left( \frac{\dif}{\dif s} \norm{u-v} \right) \dif s \le L \int_{0}^t \norm{x-y} \dif s.
	\]
	
	To obtain the upper bound for $\norm{x-y}$, we first bound its derivative by
	\begin{equation}\label{eqn:abs_deri_x-y}
		\left| \frac{\dif}{\dif t} \norm{x-y} \right| = \frac{\left| \ip{u-v}{x-y} \right|}{\norm{x-y}} \le \norm{u-v} \le L \int_{0}^t \norm{x-y} \dif s.
	\end{equation}
	Therefore, 
	\begin{align*}
		\norm{x-y} &= \norm{x_0 - y_0} + \int_{0}^t \left( \frac{\dif}{\dif s} \norm{x-y} \right) \dif s\\ 
		&\le \norm{x_0 - y_0} + L \int_{0}^t \int_{0}^s \norm{x-y} \dif r \dif s\\
		&= \norm{x_0 - y_0} + L \int_{0}^t (t-s) \norm{x-y} \dif s.
	\end{align*}
	We then deduce from \cite[Lemma A.5]{LSV18} that
	\begin{equation}\label{eqn:upper_norm_x-y}
		\norm{x-y} \le \norm{x_0 - y_0} \cosh\left( \sqrt{L}t \right) \le \sqrt{2} \norm{x_0 - y_0},
	\end{equation}
	where we use the fact that $\cosh(\sqrt{L} t) \leq \cosh(1/2) \le \sqrt{2}$.
	
	Next, we deduce from \eqref{eqn:abs_deri_x-y} and \eqref{eqn:upper_norm_x-y} that
	\begin{align*}
		\frac{\dif}{\dif t} \norm{x-y} 
		&\ge - L \int_{0}^t \norm{x-y} \dif s\\ 
		&\ge - L \norm{x_0 - y_0} \int_{0}^t  \cosh\left( \sqrt{L}s \right) \dif s\\ 
		&= - \sqrt{L} \norm{x_0 - y_0} \sinh\left( \sqrt{L} s \right).
	\end{align*}
	Thus, we obtain
	\begin{align*}
		\norm{x-y} &= \norm{x_0 - y_0} + \int_{0}^t \left( \frac{\dif}{\dif s} \norm{x-y} \right) \dif s\\ 
		&\ge \norm{x_0 - y_0} - \sqrt{L} \norm{x_0 - y_0} \int_{0}^t \sinh\left( \sqrt{L} s \right) \dif s\\
		&= \norm{x_0 - y_0} \left( 2 - \cosh\left( \sqrt{L} t \right) \right) \ge \frac{1}{\sqrt{2}} \norm{x_0 - y_0},
	\end{align*}
	where we use $2 - \cosh(\sqrt{L} t) \ge 2- \cosh(1/2) \ge 1/\sqrt{2}$.
\end{proof}

\section{Lower bound for ideal HMC}
In this section, we show that the relaxation time of ideal HMC can achieve $\Theta(\kappa)$ for some $\mu$-strongly convex and $L$-smooth function, and thus prove Theorem~\ref{thm:rel_lower}. 

Consider a two-dimensional quadratic function:
\[
f(x_1,x_2) = \frac{x_1^2}{2 \sigma_1^2} + \frac{x_2^2}{2\sigma_2^2},
\]
where $\sigma_1 = 1/\sqrt{\mu}$ and $\sigma_2 = 1/\sqrt{L}$. Thus, $f$ is $\mu$-strongly convex and $L$-smooth. The probability density $\nu$ proportional to $e^{-f}$ is essentially the bivariate Gaussian distribution: for $(x_1,x_2)\in \R^2$, 
\[
\nu(x_1,x_2) = \frac{1}{2\pi \sigma_1 \sigma_2} \exp\left( - \frac{x_1^2}{2\sigma_1^2} - \frac{x_2^2}{2\sigma_2^2} \right).
\]
The following lemma shows that ideal HMC for the bivariate Gaussian distribution $\nu$ has relaxation time $\Omega(\kappa)$, and then Theorem~\ref{thm:rel_lower} follows immediately.

\begin{lemma}
	For any constant $c > 0$, the relaxation time of ideal HMC for sampling from $\nu$ with step-size $T = 1/(c\sqrt{L})$ is at least $2c^2\kappa$.
\end{lemma}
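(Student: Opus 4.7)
The plan is to exploit the explicit solvability of HMC on a quadratic potential by exhibiting a linear eigenfunction of the transition operator whose eigenvalue is close to $1$. Since $f(x_1,x_2) = x_1^2/(2\sigma_1^2) + x_2^2/(2\sigma_2^2)$ is separable, the ODE system \eqref{eq:ODE_system} decouples into two independent one-dimensional harmonic oscillators. For the first coordinate, $x_1''(t) = -x_1(t)/\sigma_1^2$ with $x_1(0)$ and $x_1'(0) = v_1 \sim N(0,1)$ has the closed-form solution
\[
x_1(T) = x_1(0)\cos(T/\sigma_1) + v_1\,\sigma_1\,\sin(T/\sigma_1).
\]

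Next I would take the linear test function $\varphi(x_1,x_2) = x_1$. Then $\varphi \in L_2^0(\nu)$ since $\nu$ has mean zero and $\int x_1^2\,\nu(\dif x) = \sigma_1^2 < \infty$. The only randomness in one HMC step is the initial velocity $v \sim N(0,I)$, and $v_1$ has mean zero, so
\[
(P\varphi)(x_1,x_2) = \Exp_{v_1 \sim N(0,1)}\!\left[x_1\cos(T/\sigma_1) + v_1\,\sigma_1\,\sin(T/\sigma_1)\right] = \cos(T/\sigma_1)\cdot\varphi(x_1,x_2).
\]
Hence $\varphi$ is an eigenfunction of $P$ with eigenvalue $\lambda = \cos(T/\sigma_1)$, giving $\norm{P\varphi} = |\lambda|\,\norm{\varphi}$. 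The definition of the absolute spectral gap then yields $\gamma(P) \le 1 - |\lambda|$ directly.

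To finish, I would substitute $\sigma_1 = 1/\sqrt{\mu}$ and $T = 1/(c\sqrt{L})$ to get $T/\sigma_1 = 1/(c\sqrt{\kappa})$. When $1/(c\sqrt{\kappa}) \le \pi/2$, the value $\cos(T/\sigma_1)$ is nonnegative, so the elementary identity $1 - \cos\theta = 2\sin^2(\theta/2) \le \theta^2/2$ gives
\[
\gamma(P) \le 1 - \cos\!\bigl(1/(c\sqrt{\kappa})\bigr) \le \frac{1}{2c^2\kappa},
\]
and hence $\Trel(P) \ge 2c^2\kappa$. In the remaining regime $1/(c\sqrt{\kappa}) > \pi/2$, one has $2c^2\kappa < 8/\pi^2 < 1 \le \Trel(P)$, so the inequality holds trivially. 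There is no genuine obstacle in this argument: once one observes that the quadratic case decouples HMC into two uncoupled harmonic oscillators for which the coordinate function $x_1$ is an exact eigenfunction, the lower bound follows from a one-line Taylor expansion of cosine.
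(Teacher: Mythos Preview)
Your proposal is correct and follows essentially the same route as the paper: exploit the explicit harmonic-oscillator solution to show that the linear function $x_1$ is an eigenfunction of the HMC transition operator with eigenvalue $\cos(T/\sigma_1)$, then bound $1-|\cos\theta|$ by $\theta^2/2$. The only cosmetic differences are that the paper first passes to the product-chain decomposition and computes $\ip{P_1 h}{h}/\norm{h}^2$ rather than directly verifying the eigenfunction identity, and it does not separate out the large-angle case (the inequality $1-|\cos\theta|\le\theta^2/2$ in fact holds for all $\theta\ge 0$, so your case split is harmless but unnecessary).
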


\begin{proof}
	The Hamiltonian curve for $f$ is given by the ODE
	\[
		(x_1'', x_2'') = - \grad f(x_1,x_2) = \left( -\frac{x_1}{\sigma_1^2}, -\frac{x_2}{\sigma_2^2} \right)
	\]
	with initial position $(x_1(0), x_2(0))$ and initial velocity $(x_1'(0),x_2'(0))$ from the bivariate standard Gaussian $N(0,I)$. Observe that $\nu = \nu_1 \otimes \nu_2$ is a product distribution of the two coordinates and HMC for $f$ is a product chain. Thus, we can consider the dynamics for each coordinate separately. The Hamiltonian ODE for one coordinate becomes
	\[
		x_i'' = - \frac{x_i}{\sigma_i^2}, \quad x_i(0), \quad x_i'(0) = v_i(0) \sim N(0,1)
	\]
	where $i=1,2$. Solving the ODE above and plugging in the step-size $t=T$, we get
	\[
		x_i(T) = x_i(0) \cos(T/\sigma_i) + v_i(0) \sigma_i \sin(T/\sigma_i).
	\]

	Let $P_i$ be the transition kernel of ideal HMC for the $i$th coordinate (considered as a Markov chain on $\R$). Then for $x,y\in\R$ we have
	\[
		P_i(x,y) = \frac{1}{\sqrt{2\pi} \sigma_i \sin(T/\sigma_i)} \exp\left( - \frac{\left( y - x\cos(T/\sigma_i) \right)^2}{2\sigma_i^2 \sin^2(T/\sigma_i)} \right).
	\] 
	Namely, given the current position $x$, the next position $y$ is from a normal distribution with mean $x\cos(T/\sigma_i)$ and variance $\sigma_i^2 \sin^2(T/\sigma_i)$. 
	Denote the spectral gap of $P_i$ by $\gamma_i$ for $i=1,2$ and that of ideal HMC by $\gamma$. Let $h(x) = x$ and note that $h\in L^0_2(\nu_i)$. Using the properties of product chains and spectral gaps, we deduce that
	\[
		\gamma \le \min\{\gamma_1,\gamma_2\} \le \gamma_1 
		= 1 - \sup_{f\in L_2^0(\nu_1)} \frac{|\ip{P_1 f}{f}|}{\norm{f}^2}
		\le 1 - \frac{|\ip{P_1 h}{h}|}{\norm{h}^2}. 
	\]
	Since we have
	\[
		\norm{h}^2 = \int_{-\infty}^{\infty} \nu_1(x) h(x)^2 \dif x = \sigma_1^2
	\]
	and
	\[
		\ip{P_1 h}{h} = \int_{-\infty}^{\infty} \nu_1(x) P_1(x,y) h(x) h(y) \dif x \dif y = \sigma_1^2 \cos(T/\sigma_1),
	\]
	it follows that $\gamma \le 1 - |\cos(T/\sigma_1)|$.
	Suppose that $T = 1/(c\sqrt{L})$ for some $c>0$. Then we get
	\[
		\gamma 
		\le \frac{T^2}{2\sigma_1^2} = \frac{1}{2c^2} \frac{\mu}{L}, 
	\]
	and consequently $\Trel = 1/\gamma \ge 2c^2\kappa$.
\end{proof}

\section{Convergence rate of discretized HMC}
In this section, we show how our improved contraction bound (Lemma~\ref{lem:HMC_contraction}) implies that HMC returns a good enough sample after $\widetilde{O}((\kappa d)^{1.5})$ steps. We will use the framework from \cite{LSV18} to establish Theorem~\ref{thm:mixing}.

We first state the ODE solver from \cite{LSV18}, which solves an ODE in nearly optimal time when the solution to the ODE can be approximated by a piece-wise polynomial. We state here only for the special case of second order ODEs for the Hamiltonian system. We refer to \cite{LSV18} for general $k$th order ODEs.

\begin{thm}[{\cite[Theorem 2.5]{LSV18}}]
	\label{thm:ODE_solver}
	Let $x(t)$ be the solution to the ODE
	\[
		x''(t) = -\grad f(x(t)), \quad x(0) = x_0, \quad x'(0) = v_0.
	\]
	where $x_0,v_0 \in \R^d$ and $0\le t\le T$. Suppose that the following conditions hold:
	\begin{enumerate}
		\item There exists a piece-wise polynomial $q(t)$ such that $q(t)$ is a polynomial of degree $D$ on each interval $[T_{j-1},T_j]$ where $0=T_0 < T_1 < \dots <T_m = T$, and for all $0\le t\le T$ we have
		\[
			\norm{q(t) - x''(t)} \le \frac{\delta}{T^2};
		\]
		\item $\{T_j\}_{j=1}^m$ and $D$ are given as input to the ODE solver;
		\item The function $f$ has a $L$-Lipschitz gradient; i.e., for all $x,y\in\R^d$,
		\[
			\norm{\grad f(x) - \grad f(y)} \le L \norm{x-y}.
		\]
	\end{enumerate}
	If $\sqrt{L} T \le 1/16000$, then the ODE solver can find a piece-wise polynomial $\tilde{x}(t)$ such that for all $0\le t\le T$,
	\[
		\norm{\tilde{x}(t) - x(t)} \le O(\delta).
	\]
	The ODE solver uses $O(m(D+1) \log(CT/\delta))$ evaluations of $\grad f$ and $O(dm (D+1)^2 \log(CT/\delta))$ time where
	\[
		C = O\left(\norm{v_0} + T\norm{\grad f(x_0)} \right).
	\]
\end{thm}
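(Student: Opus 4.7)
The plan is to set up a Picard-style fixed-point iteration in the space of piecewise polynomial trajectories. Integrating the Hamiltonian ODE twice, the solution satisfies the integral equation
\[
x(t) = x_0 + v_0 t - \int_0^t (t-s)\, \grad f(x(s))\, \dif s,
\]
so I would define an operator $\mathcal{T}$ sending a candidate piecewise polynomial trajectory $\tilde{x}$ to a new one by: (i) evaluating $\grad f(\tilde{x}(t))$ at a fixed set of collocation points (Chebyshev nodes) on each subinterval $[T_{j-1},T_j]$, (ii) fitting a degree-$D$ polynomial $\tilde{q}_j$ to these values on each piece, and (iii) integrating the piecewise polynomial $\tilde{q}$ twice analytically to produce a piecewise polynomial of degree $D+2$ satisfying the initial conditions at $x_0,v_0$.

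The core of the argument is to show that $\mathcal{T}$ is a contraction in sup norm on $[0,T]$. If two candidates satisfy $\sup_{t\in[0,T]}\|\tilde{x}_1(t)-\tilde{x}_2(t)\| \le \eta$, then the $L$-Lipschitz hypothesis gives pointwise gradient error $L\eta$; two integrations against $t\le T$ contribute a factor $T^2/2$; and the Lebesgue constant of Chebyshev interpolation contributes only a $\log D$ factor. The hypothesis $\sqrt{L}T \le 1/16000$ makes $LT^2$ so small that $\mathcal{T}$ contracts at rate at most $1/2$, say.

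With contraction in hand, there is a unique fixed point $\tilde{x}^*$, and I would relate it to the true $x$ through the hypothesized polynomial $q(t)$ approximating $x''(t)$ to accuracy $\delta/T^2$. At the fixed point, the interpolation error for $\grad f(\tilde{x}^*)$ is at most $O(\log D)$ times the best degree-$D$ piecewise polynomial approximation of $\grad f(x)$ on each piece, which by the triangle inequality with $q$ is $O(\delta/T^2) + O(L \|\tilde{x}^*-x\|_\infty)$; two integrations against $t \le T$ then give $\|\tilde{x}^*-x\|_\infty = O(\delta) + O(LT^2)\|\tilde{x}^*-x\|_\infty$, which we can close to $O(\delta)$ using $LT^2 \ll 1$. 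Starting the iteration from $\tilde{x}_0(t) = x_0 + v_0 t$, the initial distance to $\tilde{x}^*$ is bounded by $O(T\|v_0\| + T^2\|\grad f(x_0)\|) = O(CT)$, so $O(\log(CT/\delta))$ iterations suffice to bring the iterate within $O(\delta)$ of $\tilde{x}^*$, hence $O(\delta)$ of $x$.

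The main obstacle I expect is handling the interpolation error cleanly. The hypothesis only promises a good piecewise polynomial approximation to the \emph{true} $x''(t)$, while the iteration interpolates $\grad f(\tilde{x}_k)$ for \emph{approximate} trajectories; disentangling these requires the simultaneous closure argument sketched above, and one must verify that Chebyshev (or a similar) collocation achieves near-best polynomial approximation on each piece with only a polylogarithmic Lebesgue constant. Once the analytic bounds are in place, the complexity accounting is routine: each iteration issues $m(D+1)$ gradient queries for the collocation values and performs polynomial fitting, differentiation, and integration on each piece in $O((D+1)^2)$ time per coordinate, giving $O(dm(D+1)^2)$ time per iteration and $O(m(D+1)\log(CT/\delta))$ gradient queries overall.
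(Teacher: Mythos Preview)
This theorem is not proved in the paper at all: it is quoted verbatim from \cite{LSV18} (their Theorem~2.5) and used as a black box in the proof of Theorem~\ref{thm:mixing}. So there is no ``paper's own proof'' to compare against here.

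That said, your sketch is a reasonable outline of the collocation/Picard argument underlying the result in \cite{LSV18}: iterate by evaluating $\grad f$ at Chebyshev nodes on each subinterval, interpolate, integrate twice, and use $LT^2 \ll 1$ to get contraction; the $\log(CT/\delta)$ iteration count comes from the geometric convergence and the initial residual of the linear guess $x_0 + v_0 t$. Two places where your write-up would need more care if you actually carried it out: (i) the Lebesgue-constant factor $O(\log D)$ from Chebyshev interpolation multiplies the $\delta/T^2$ term as well, so you should check that the final bound is really $O(\delta)$ rather than $O(\delta\log D)$ (or absorb this into the constant, as the statement's $O(\cdot)$ allows); and (ii) the claim that $\|\tilde{x}_0 - \tilde{x}^*\|_\infty = O(CT)$ with $C = \|v_0\| + T\|\grad f(x_0)\|$ requires a short a~priori bound on $\sup_{[0,T]}\|\grad f(x(s))\|$ in terms of $\|v_0\|$ and $\|\grad f(x_0)\|$, which follows from a Gr\"onwall-type argument using $\sqrt{L}T \le 1/16000$ but is not immediate. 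Neither point is a fatal gap; both are the kind of bookkeeping the full proof in \cite{LSV18} handles.
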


The following lemma, which combines Theorem~3.2, Lemma 4.1 and Lemma 4.2 from \cite{LSV18}, establishes the conditions of Theorem~\ref{thm:ODE_solver} in our setting. We remark that Lemmas 4.1 and 4.2 hold for all $T \le 1/(8\sqrt{L})$, and Theorem~3.2, though stated only for $T \le O(\mu^{1/4}/L^{3/4})$ in \cite{LSV18}, holds in fact for the whole region $T \le 1/(2\sqrt{L})$ where the contraction bound (Lemma~\ref{lem:HMC_contraction}) is true. We omit these proofs here and refer the readers to \cite{LSV18} for more details.

\begin{lemma}\label{lem:ODE_prep}
	Let $f$ be a twice differentiable function such that $\mu I \preceq \grad^2 f(x) \preceq LI$ for all $x\in \R^d$. Choose the starting point $x^{(0)} = \arg\min_x f(x)$, step-size $T = 1/(16000\sqrt{L})$, and ODE error tolerance $\delta = \sqrt{\mu}T^2 \eps/16$ in the HMC algorithm. Let $\{x^{(k)}\}_{k=1}^N$ be the sequence of points we get from the HMC algorithm and $\{v_0^{(k)}\}_{k=1}^N$ be the sequence of random Gaussian vector we choose in each step. Let $\pi \propto e^{-f}$ be the target distribution and let $\pi_{\textsc{hmc}}$ be the distribution of $x^{(N)}$, i.e., the output of HMC. For any $0<\eps<\sqrt{d}$, if we run HMC for 
	$$ N = O\left( \frac{\log(d/\eps)}{\mu T^2} \right) = O\left( \kappa \log(d/\eps) \right) $$ 
	steps where $\kappa = L/\mu$, then:
	\begin{enumerate}
		\item \emph{(\cite[Theorem 3.2]{LSV18})} We have that
		\[
			W_2(\pi_{\textsc{hmc}},\pi) \le \frac{\eps}{\sqrt{\mu}};
		\]
		\item \emph{(\cite[Lemma 4.1]{LSV18})} For each $k$, let $x_k(t)$ be the solution to the ODE \eqref{eq:ODE_Ham} in the $k$th step of HMC. Then there is a piece-wise constant function $q_k$ of $m_k$ pieces such that $\norm{q_k(t)-x_k''(t)} \le \delta/T^2$ for all $0\le t\le T$, where
		\[
		m_k = \frac{2LT^3}{\delta}\left( \norm{v_0^{(k-1)}} + T\norm{\grad f(x^{(k-1)})} \right);
		\]
		\item \emph{(\cite[Lemma 4.2]{LSV18})} We have that
		\[
		\frac{1}{N}\, \Exp\left[ \sum_{k=1}^{N} \norm{\grad f(x^{(k-1)})}^2 \right] \le O(Ld).
		\]
	\end{enumerate}		
\end{lemma}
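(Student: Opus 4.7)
The plan is to prove the three parts in sequence, exploiting the improved contraction bound (Lemma~\ref{lem:HMC_contraction}) to justify the larger step size $T = 1/(16000\sqrt{L})$. The key difference from \cite{LSV18} is that our per-step contraction factor is $\sqrt{1 - \mu T^2/4} = 1 - \Omega(1/\kappa)$ rather than $1 - \Omega(\kappa^{-3/2})$, which is precisely what allows $N = O(\kappa \log(d/\eps))$ steps to suffice in place of $O(\kappa^{3/2} \log(d/\eps))$. Apart from this change, the three claims follow the same framework as in \cite{LSV18}.

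For part (1), I would set up a synchronous coupling: let $\{y^{(k)}\}$ be an ideal HMC chain started from $\pi$ and driven by exactly the same random velocities $v_0^{(k)}$ as the discretized chain $\{x^{(k)}\}$, so $y^{(k)}$ remains distributed as $\pi$. Each step of ideal HMC shrinks the distance between the two coupled positions by $\sqrt{1 - \mu T^2/4}$ via Lemma~\ref{lem:HMC_contraction}, and the numerical ODE solver contributes at most $\delta$ of additive error per step by assumption. Passing to expectations and applying the triangle inequality in $W_2$ gives the recursion
\[
W_2(x^{(k)},\pi) \le \sqrt{1 - \tfrac{\mu T^2}{4}}\, W_2(x^{(k-1)},\pi) + \delta.
\]
Iterating, combined with the standard bound $W_2(x^{(0)},\pi) \le \sqrt{d/\mu}$ when $x^{(0)}$ is the mode of a strongly logconcave target, and the choices $N = O(\kappa \log(d/\eps))$ and $\delta = \sqrt{\mu}T^2\eps/16$, yields the claimed $W_2(\pi_{\textsc{hmc}},\pi) \le \eps/\sqrt{\mu}$.

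For part (2), I would use the ODE directly: $x_k''(t) = -\grad f(x_k(t))$ is Lipschitz in $t$ because $\grad f$ is $L$-Lipschitz and $x_k$ has bounded velocity. Specifically, $\norm{x_k''(t) - x_k''(s)} \le L\norm{x_k(t) - x_k(s)} \le L |t-s| \sup_{\tau \in [0,T]} \norm{x_k'(\tau)}$. Conservation of the Hamiltonian $H(x_k(\tau),x_k'(\tau))$ together with the smoothness of $f$ bounds the supremum velocity by $O(\norm{v_0^{(k-1)}} + T\norm{\grad f(x^{(k-1)})})$. Partitioning $[0,T]$ into $m_k$ equal subintervals on which $x_k''$ varies by at most $\delta/T^2$ then matches the stated $m_k = O(LT^3(\norm{v_0^{(k-1)}} + T\norm{\grad f(x^{(k-1)})})/\delta)$, and a piecewise constant $q_k$ taking the value $x_k''$ at any single point in each subinterval works.

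For part (3), I would use the integration-by-parts identity $\Exp_\pi[\norm{\grad f}^2] = \Exp_\pi[\tr(\hessian f)] \le Ld$ for $\pi \propto e^{-f}$. Combining with the Wasserstein bound from part (1) and the $L$-Lipschitzness of $\grad f$ yields
\[
\Exp\left[\norm{\grad f(x^{(k-1)})}^2\right] \le 2 Ld + 2L^2 W_2^2(x^{(k-1)},\pi),
\]
and the second term decays geometrically with rate $1 - \Omega(1/\kappa)$, so averaging over $k=1,\dots,N$ gives the claimed $O(Ld)$ bound. The main technical obstacle is the velocity estimate in part (2): a naive Gronwall-type argument would give an exponential dependence on $\sqrt{L}T$; making Hamiltonian conservation tight is what produces the clean additive form of $m_k$, which is essential for keeping the per-step cost at $\widetilde{O}(\sqrt{\kappa d}/\eps)$ gradient evaluations on average.
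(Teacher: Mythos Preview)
The paper does not actually prove this lemma: it notes that Theorem~3.2 and Lemmas~4.1, 4.2 of \cite{LSV18} continue to hold for the larger step size $T=\Theta(1/\sqrt{L})$ once the improved contraction bound (Lemma~\ref{lem:HMC_contraction}) replaces the one used in \cite{LSV18}, and then defers all details to that reference. Your sketches for parts~(1) and~(2) are correct and match that framework; in particular, the Hamiltonian-conservation route to the velocity bound $\sup_{t\le T}\norm{x_k'(t)} = O\big(\norm{v_0^{(k-1)}} + T\norm{\grad f(x^{(k-1)})}\big)$ is exactly what yields the stated piece count $m_k$. (Your closing remark that Gronwall would fail here is a bit off: since $\sqrt{L}T\le 1/16000$ the exponential factor is harmless, so either route works.)

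Your argument for part~(3), however, has a quantitative gap. The inequality
\[
\Exp\big[\norm{\grad f(x^{(k-1)})}^2\big] \le 2Ld + 2L^2\, W_2^2(x^{(k-1)},\pi)
\]
is fine, but the second term is initially $2L^2 W_2^2(x^{(0)},\pi)\le 2L^2\cdot d/\mu = 2L\kappa d$, not $O(Ld)$. Decaying at rate $1-\Omega(1/\kappa)$, its sum over $k=1,\dots,N$ is $O(L\kappa^2 d)$, and dividing by $N=\Theta(\kappa\log(d/\eps))$ leaves $O\big(L\kappa d/\log(d/\eps)\big)$ rather than $O(Ld)$. That extra factor of $\kappa$ would propagate through the proof of Theorem~\ref{thm:mixing} (yielding $\widetilde{O}(\kappa\sqrt{d}/\eps)$ gradient evaluations per step instead of $\widetilde{O}(\sqrt{\kappa d}/\eps)$), so the sketch as written does not establish part~(3); you need the sharper argument from \cite{LSV18} rather than the crude Lipschitz-plus-$W_2$ comparison.
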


\begin{proof}[Proof of Theorem~\ref{thm:mixing}]
	The convergence of HMC is guaranteed by part 1 of Lemma~\ref{lem:ODE_prep}. In the $k$th step, the number of evaluations of $\grad f$ is $O(m_k \log (C_k \sqrt{\kappa}/\eps))$ by Theorem~\ref{thm:ODE_solver} and part 2 of Lemma~\ref{lem:ODE_prep}, where
	\[
		m_k = O\left( \frac{\sqrt{\kappa}}{\eps}\right) \left( \norm{v_0^{(k-1)}} + T\norm{\grad f(x^{(k-1)})}  \right) 
	\mbox{ and }
		C_k = O\left( \norm{v_0^{(k-1)}} + T\norm{\grad f(x^{(k-1)})} \right).
	\]
	Thus, the average number of evaluations of $\grad f$ per step is upper bounded by
	\[
		\frac{1}{N}\, \Exp \left[ \sum_{k=1}^N O(m_k \log (C_k \sqrt{\kappa}/\eps)) \right] \le \frac{1}{N}\, \Exp \left[ \sum_{k=1}^N O(m_k \log m_k) \right]
		\le \frac{1}{N} \, O\left( \Exp \left[ M \log M \right] \right),
	\]
	where $M = \sum_{k=1}^N m_k$. Since each $v_0^{(k-1)}$ is sampled from the standard Gaussian distribution, we have $\Exp \Big[ \norm{v_0^{(k-1)}}^2 \Big] = d$. Thus, by the Cauchy-Schwarz inequality and part 3 of Lemma~\ref{lem:ODE_prep}, we get
	\[
		\Exp\left[ M^2 \right] \le N \sum_{k=1}^N \Exp\left[ m_k^2 \right] \le O\left( \frac{N\kappa}{\eps^2} \right) \sum_{k=1}^N \Exp\left[ \norm{v_0^{(k-1)}}^2 \right] + T^2 \Exp\left[ \norm{\grad f(x^{(k-1)})}^2 \right] \le O\left( \frac{N^2\kappa d}{\eps^2} \right).
	\]	
	We then deduce again from the Cauchy-Schwarz inequality that
	\[
		\left(\Exp[M \log M]\right)^2 \le \Exp\left[M^2\right] \cdot \Exp\left[\log^2 M\right] \le \Exp\left[ M^2 \right] \cdot \log^2\left( \Exp M \right) 
		\le \Exp[M^2] \cdot \log^2\left( \sqrt{\Exp\left[ M^2 \right]} \right), 
	\]
	where the second inequality is due to that $h(x) = \log^2(x)$ is concave when $x \geq 3$. Therefore, the number of evaluations of $\nabla f$ per step, amortized over all steps, is 
	\[
		\frac{1}{N}\, O\left( \sqrt{\Exp[M^2]}  \log\left( \sqrt{\Exp\left[ M^2 \right]} \right) \right) \le O\left( \frac{\sqrt{\kappa d}}{\eps} \log\left( \frac{\kappa d}{\eps} \right) \right).
	\]
	Using a similar argument we have the bound for the expected running time per step. This completes the proof.
\end{proof}

\paragraph{Acknowledgments.} We are grateful to Yin Tat Lee for helpful discussions. This work was supported in part by NSF awards CCF-1563838, 1717349, 1617306 and DMS-1839323.

 \bibliographystyle{plain}
\bibliography{HMC}

\end{document}